\newtheorem{theorem}{Theorem}
\theoremstyle{remark}
\def\m{{\operatorname{m}}}
\def\m{{\rm m}}
\renewcommand{\author}[1]{\large\rm #1\\ \bigskip}
\newcommand{\address}[1]{{\normalsize\it #1\\}\bigskip}
\renewcommand{\title}[1]{\bigskip\bigskip\Large\bf #1\bigskip\bigskip\\}
\begin{document}


\vglue .3 cm

\begin{center}

\title{An integral arising from the chiral $sl(n)$ Potts model}
\author{              Anthony J. Guttmann\footnote[1]{email:
                {\tt tonyg@ms.unimelb.edu.au}}}
\address{ ARC Centre of Excellence for\\
Mathematics and Statistics of Complex Systems,\\
Department of Mathematics and Statistics,\\
The University of Melbourne, Victoria 3010, Australia}
\author{              Mathew D Rogers\footnote[2]{email:
                {\tt mathewrogers@gmail.com}}}
\address{
Department of Mathematics and Statistics,\\
Universit\'e de Montr\'eal, Montr\'eal, Qu\'ebec, H3C\,3J7, Canada}

\end{center}
\setcounter{footnote}{0}

\begin{abstract}
We show that the integral
\begin{equation}\label{int}
J(t) = \frac{1}{\pi^3} \int_0^\pi \int_0^\pi \int_0^\pi dx dy dz \log(t-\cos{x}-\cos{y}-\cos{z}+\cos{x}\cos{y}\cos{z}),
\end{equation}
can be expressed in terms of ${_5F_4}$ hypergeometric functions. The integral arises in the solution by Baxter and Bazhanov of the free-energy of the $sl(n)$ Potts model, which includes the term $J(2).$ Our result immediately gives the logarithmic Mahler measure of the Laurent polynomial
$$k-\left(x+\frac{1}{x}\right )-\left(y+\frac{1}{y}\right )-\left(z+\frac{1}{z}\right )+\frac{1}{4}\left(x+\frac{1}{x}\right ) \left(y+\frac{1}{y}\right ) \left(z+\frac{1}{z}\right )$$ in terms of the same hypergeometric functions.

\end{abstract}

\section{Calculation of the integral}
There exists an extensive literature on solvable two-dimensional lattice models, based on the concept of commuting transfer matrices and the Yang-Baxter relation. Far fewer solutions are available for three-dimensional models. The first such model was obtained by Zamolodchikov \cite{Z80, Z81} in 1980-81. He introduced a so-called tetrahedron relation as the appropriate generalization of the Yang-Baxter equation. In 1983 Baxter solved this model
using only commutativity, symmetry and a factorizability
property of the transfer matrix. A decade later Bazhanov and Baxter \cite{BB93} studied the $sl(n)$-chiral Potts model \cite{B91}, which can be considered a multi-state generalization of the Zamolodchikov model, and showed that for this model it was also possible to bypass the tetrahedron equations and solve the model by using symmetry properties to prove the commutativity of the row-to-row transfer matrices. Aspects of the solution were later discussed by Bazhanov \cite{B93}, who showed that the solution could be expressed in terms of a free boson or Gaussian model on the three-dimensional cubic lattice. The partition function was expressed in terms of a fractional power of the determinant of a cyclic square matrix. In the thermodynamic limit the free energy was given as the sum of two terms, one of which was the integral $J(2)$, where
\begin{equation}\label{int}
J(t) = \frac{1}{\pi^3} \int_0^\pi \int_0^\pi \int_0^\pi dk_1 dk_2 dk_3 \log(t-c_1-c_2-c_3+c_1 c_2 c_3),
\end{equation}
where $c_i$ denotes $\cos(k_i).$ Baxter and Bazhanov showed that
\begin{equation}\label{J(2) evaluation}
J(2)=\frac{8}{\pi}L_{-4}(2) - 3\log{2},
\end{equation}
where $L_{-4}(2)$ is Catalan's constant, $L_{k}(s):=\sum_{n=1}^{\infty}\left(\frac{k}{n}\right)n^{-s}$ denotes Dirichlet's $L$-series, and $\left(\frac{k}{n}\right)$ is the Jacobi symbol.

Here we show that the integral can be expressed in terms of $_5F_4$ hypergeometric functions for $|t|$ sufficiently large, and $t\ge 2.0802\dots$ on the real axis. Our starting point is the work of Delves, Joyce and Zucker \cite{J98} who considered the related, Green function integral
\begin{equation}\label{green}
G(t) = \frac{1}{\pi^3} \int_0^\pi \int_0^\pi \int_0^\pi \frac {dk_1 dk_2 dk_3}{t-c_1-c_2-c_3+c_1 c_2 c_3}.
\end{equation}
If $t$ lies in the complex plane cut along the real axis from $-2$ to $2,$ then they proved that
\begin{equation}\label{Joyce result}
G(t) = \frac{1}{t}\left ( 1- \frac{4}{t^2} \right )^{-1/4} \left [ {_2F_1}\left (\substack {\frac{1}{8},\frac{5}{8}\\1};\frac{4}{t^2} \right ) \right ]^2.
\end{equation}
There are at least two methods for integrating this formula.  The more difficult approach is to use \eqref{Joyce result} to derive a modular expansion for $J(t)$, which can then be compared to known modular expansions for $_5F_4$ functions.  This approach was used in \cite{R09} to study three-variable Mahler measures.  We leave those calculations for a future paper \cite{RgNEW}, since the following theorem can be proved via standard hypergeometric transformations.
\begin{theorem} Suppose that $|\alpha|$ is sufficiently small but non-zero.  Then
\begin{equation}\label{J in terms of hypergeometrics}
\begin{split}
J&\left(\sqrt{4\alpha(1-\alpha)}+\frac{1}{\sqrt{4\alpha(1-\alpha)}}\right)\\
&=-\frac{1}{2}\log\left(4\alpha (1-\alpha)^{19} (1+\alpha)^{12} \right)-\frac{11}{4}\alpha(1-\alpha){_5F_4}\left(\substack{\frac{3}{2},\frac{3}{2},\frac{3}{2},1,1\\2,2,2,2};4\alpha(1-\alpha)\right)\\
&\quad-\frac{7\alpha }{4 (1-\alpha)^2}{_5F_4}\left(\substack{\frac{3}{2},\frac{3}{2},\frac{3}{2},1,1\\2,2,2,2};-\frac{4 \alpha}{(1-\alpha)^2}\right)
+\frac{9 \alpha(1-\alpha)^2 }{4(1+\alpha)^4}{_5F_4}\left(\substack{\frac{5}{4},\frac{3}{2},\frac{7}{4},1,1\\2,2,2,2};\frac{16 \alpha(1-\alpha)^2}{(1+\alpha)^4}\right).
\end{split}
\end{equation}
\end{theorem}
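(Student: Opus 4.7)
I would prove \eqref{J in terms of hypergeometrics} by verifying that both sides have the same derivative with respect to $\alpha$ and that they agree at $\alpha = 0$. On the LHS, since $J'(t) = G(t)$, the chain rule and Joyce's formula \eqref{Joyce result} yield
\begin{equation*}
\frac{d}{d\alpha}J(t(\alpha)) = G(t(\alpha))\,t'(\alpha),
\end{equation*}
which, under the parametrization $t(\alpha) = \sqrt{4\alpha(1-\alpha)} + 1/\sqrt{4\alpha(1-\alpha)}$, is an explicit algebraic function of $\alpha$ times $[{_2F_1}(1/8, 5/8; 1; 4/t^2)]^2$, where $4/t^2 = 16\alpha(1-\alpha)/(1+4\alpha-4\alpha^2)^2$.

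\textbf{RHS reduction.} For the derivative of the RHS, I would use the elementary series identity
\begin{equation*}
\frac{d}{dy}\!\left[y\cdot{_5F_4}\!\left(\substack{a_1, a_2, a_3, 1, 1\\2,2,2,2};y\right)\right] = {_4F_3}\!\left(\substack{a_1, a_2, a_3, 1\\2, 2, 2};y\right),
\end{equation*}
which follows at once from $(1)_n(n+1) = (2)_n$. A second application combined with Clausen's identities $[{_2F_1}(1/4,1/4;1;s)]^2 = {_3F_2}(1/2,1/2,1/2;1,1;s)$ and $[{_2F_1}(1/8,3/8;1;s)]^2 = {_3F_2}(1/4,1/2,3/4;1,1;s)$ yields the closed-form reductions
\begin{equation*}
y\cdot{_4F_3}\!\left(\substack{3/2,3/2,3/2,1\\2,2,2};y\right) = 8\bigl([{_2F_1}(1/4,1/4;1;y)]^2 - 1\bigr),\qquad y\cdot{_4F_3}\!\left(\substack{5/4,3/2,7/4,1\\2,2,2};y\right) = \tfrac{32}{3}\bigl([{_2F_1}(1/8,3/8;1;y)]^2 - 1\bigr).
\end{equation*}
Substituting these into $\frac{d}{d\alpha}$ of the RHS of \eqref{J in terms of hypergeometrics} (with the chain rule in $y_i(\alpha)$) produces a linear combination of squared $_2F_1$'s at $y_1 = 4\alpha(1-\alpha)$, $y_2 = -4\alpha/(1-\alpha)^2$, $y_3 = 16\alpha(1-\alpha)^2/(1+\alpha)^4$, plus rational terms arising from the $-1$'s and the derivative of $-\tfrac12\log\bigl(4\alpha(1-\alpha)^{19}(1+\alpha)^{12}\bigr)$.

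\textbf{Matching.} The theorem then reduces to the algebraic-hypergeometric identity relating the Joyce $[{_2F_1}(1/8, 5/8; 1; 4/t^2)]^2$ to a combination of the three squared $_2F_1$'s at $y_1, y_2, y_3$. The key transformations are Pfaff's
\begin{equation*}
{_2F_1}(1/8, 5/8; 1; z) = (1-z)^{-1/8}\,{_2F_1}(1/8, 3/8; 1; z/(z-1))
\end{equation*}
and the quadratic identity
\begin{equation*}
{_2F_1}(1/8, 3/8; 1; 4w(1-w)) = {_2F_1}(1/4, 3/4; 1; w)
\end{equation*}
(valid because $1/8+3/8+1/2 = 1$), together with Euler's transformation ${_2F_1}(a,b;c;z) = (1-z)^{c-a-b}{_2F_1}(c-a,c-b;c;z)$, applied with appropriate changes of variable that connect $4/t^2$ to each $y_i$. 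The boundary check at $\alpha = 0$ is immediate: $t \sim 1/(2\sqrt{\alpha})$ gives $J(t) \sim \log t \sim -\tfrac12\log(4\alpha)$, matching the log term on the RHS while every $_5F_4$ vanishes. The principal difficulty is this final algebraic matching: one must identify the three changes of variable (corresponding to three distinct rational/modular parametrizations of the elliptic object underlying Joyce's $_2F_1$) and verify that the resulting $(1-z)^{\pm 1/8}$ and $(1-z)^{1/4}$ prefactors from Pfaff and Euler, together with the Jacobians $y_i'/y_i$, combine exactly with the derivative of $-\tfrac12\log\bigl(4\alpha(1-\alpha)^{19}(1+\alpha)^{12}\bigr)$ to yield the desired identity.
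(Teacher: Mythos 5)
Your overall strategy is exactly the paper's: differentiate both sides with respect to $\alpha$, reduce the derivatives of the $_5F_4$'s to ${_3F_2}$'s (equivalently, via Clausen, to squares of $_2F_1$'s), and fix the constant of integration from the $\alpha\to 0$ asymptotics. Your two-step derivative reduction is correct --- it is precisely the paper's formula \eqref{derivative formula} combined with Clausen's identity --- and your boundary check at $\alpha=0$ is the same as (indeed slightly sharper than) the paper's.

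The gap is that you stop exactly where the real work begins. You reduce the theorem to ``the algebraic-hypergeometric identity relating the Joyce $[{_2F_1}(1/8,5/8;1;4/t^2)]^2$ to a combination of the three squared $_2F_1$'s at $y_1,y_2,y_3$,'' name the relevant transformation types, and then label the verification ``the principal difficulty'' without carrying it out. A linear combination of four distinct transcendental functions with rational coefficients cannot simply be ``verified to combine exactly'': one first needs a common normal form. That is the organizing step the paper supplies in \eqref{trans 1}--\eqref{trans 4}: each of the four squares is a rational function of $\alpha$ times the \emph{single} function $[{_2F_1}(1/2,1/2;1;\alpha)]^2$. Concretely, each argument is of the form $4w(1-w)$ --- with $w=\alpha$ for $y_1$, $w=\alpha/(\alpha-1)$ for $y_2$, $w=4\alpha/(1+\alpha)^2$ for $y_3$, and $w=-4\alpha(1-\alpha)/(1-2\alpha)^2$ for the Pfaff-transformed Joyce argument $\frac{4/t^2}{4/t^2-1}$ --- so the quadratic transformation ${_2F_1}\left(a,b;a+b+\tfrac12;4w(1-w)\right)={_2F_1}\left(2a,2b;a+b+\tfrac12;w\right)$ followed by Pfaff/Landen steps collapses everything onto $[{_2F_1}(1/2,1/2;1;\alpha)]^2$, and the presumed equality \eqref{presumed equality} becomes an identity of rational functions of $\alpha$ that can be checked mechanically. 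Your plan is executable along exactly these lines, but until this reduction is actually performed the proof is incomplete.
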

\begin{proof}  We show that the derivatives of both sides of the equation agree.  To simplify the calculations, briefly assume that $\alpha$ is a small positive real number.  Notice that for $s\in(0,1)$:
\begin{equation}\label{derivative formula}
\frac{d}{d z}\left[z~{_5F_4}\left(\substack{2-s,\frac{3}{2},1+s,1,1\\2,2,2,2};z\right)\right]=\frac{2}{s(1-s)z}\left[{_3F_2}\left(\substack{1-s,\frac{1}{2},s\\1,1};z\right)-1\right].
\end{equation}
Let us set
\begin{equation}\label{t-def}
t:=\sqrt{4\alpha(1-\alpha)}+\frac{1}{\sqrt{4\alpha(1-\alpha)}}.
\end{equation}
Differentiate \eqref{J in terms of hypergeometrics} with respect to $\alpha$, and then apply \eqref{green}, \eqref{Joyce result}, and \eqref{derivative formula}, to obtain the presumed equality
\begin{equation}\label{presumed equality}
\begin{split}
\frac{1}{t}\left ( 1- \frac{4}{t^2} \right )^{-1/4} &\left [ {_2F_1}\left (\substack {\frac{1}{8},\frac{5}{8}\\1};\frac{4}{t^2} \right ) \right ]^2~\frac{2(2 \alpha-1)^3}{(4\alpha(1-\alpha))^{3/2}}\\
=&-\frac{11 (1-2 \alpha) }{2 \alpha (1-\alpha)}\, _3F_2\left(\substack{\frac{1}{2},\frac{1}{2},\frac{1}{2}\\1,1};4 \alpha (1-\alpha)\right)
+\frac{7 (1+\alpha) }{2 \alpha (1-\alpha)}\, _3F_2\left(\substack{\frac{1}{2},\frac{1}{2},\frac{1}{2}\\1,1};-\frac{4 \alpha}{(1-\alpha)^2}\right)\\
&+\frac{3 \left(1-6\alpha+\alpha^2\right) }{2 \alpha \left(1-\alpha^2\right)}\, {_3F_2}\left(\substack{\frac{1}{4},\frac{1}{2},\frac{3}{4}\\1,1};\frac{16 \alpha(1-\alpha)^2}{(1+\alpha)^4}\right).
\end{split}
\end{equation}
We can re-express the left-hand side of the identity using equation (16) on p.~112 of Erd\'elyi et al. \cite{Bv1}, giving
\begin{equation*}
\left ( 1- \frac{4}{t^2} \right )^{-1/4} \left [ {_2F_1}\left (\substack {\frac{1}{8},\frac{5}{8}\\1};\frac{4}{t^2} \right ) \right ]^2 = \xi \left [ {_2F_1}\left (\substack {\frac{1}{4},\frac{3}{4}\\1};\frac{1-\xi}{2}\right ) \right ]^2,
\end{equation*}
where $\xi =\left (1-\frac{4}{t^2}\right )^{-1/2}$. Substituting for $t$ yields
\begin{align}
\left ( 1- \frac{4}{t^2} \right )^{-1/4} \left [ {_2F_1}\left (\substack {\frac{1}{8},\frac{5}{8}\\1};\frac{4}{t^2} \right ) \right ]^2 = &\frac{1+4 \alpha-4 \alpha^2}{(1-2 \alpha)^2} \left [ {_2F_1}\left (\substack {\frac{1}{4},\frac{3}{4}\\1};-\frac{4\alpha(1-\alpha)}{(1 - 2 \alpha)^2}\right ) \right ]^2\notag\\
=&\frac{1+4 \alpha-4 \alpha^2}{(1-\alpha)(1-2 \alpha)}\left[{_2F_1}\left(\substack{\frac{1}{2},\frac{1}{2}\\1},\frac{\alpha}{\alpha - 1}\right)\right]^2\notag\\
=&\frac{1+4 \alpha-4 \alpha^2}{(1-2 \alpha)}\left[{_2F_1}\left(\substack{\frac{1}{2},\frac{1}{2}\\1},\alpha\right)\right]^2,\label{trans 1}
\end{align}
where the second and third steps follow from \cite[pg.~95]{Be2} and \cite[pg.~38]{Be2}
The right-hand side of \eqref{presumed equality} simplifies via Clausen's identity, and the same quadratic transformations:
\begin{align}
{_3F_2}\left(\substack{\frac{1}{2},\frac{1}{2},\frac{1}{2}\\1,1};4 \alpha (1-\alpha)\right)=&\left[_2F_1\left(\substack{\frac{1}{2},\frac{1}{2}\\1};\alpha\right)\right]^2,\label{trans 2}\\
{_3F_2}\left(\substack{\frac{1}{2},\frac{1}{2},\frac{1}{2}\\1,1};-\frac{4 \alpha}{ (1-\alpha)^2}\right)=&\left[_2F_1\left(\substack{\frac{1}{2},\frac{1}{2}\\1};\frac{\alpha}{\alpha-1}\right)\right]^2=(1-\alpha)\left[_2F_1\left(\substack{\frac{1}{2},\frac{1}{2}\\1};\alpha\right)\right]^2\label{trans 3}\\
{_3F_2}\left(\substack{\frac{1}{4},\frac{1}{2},\frac{3}{4}\\1,1};\frac{16 \alpha(1-\alpha)^2}{(1+\alpha)^4}\right)=&\left[_2F_1\left(\substack{\frac{1}{4},\frac{3}{4}\\1};\frac{4\alpha}{(1+\alpha)^2}\right)\right]^2=(1+\alpha)\left[_2F_1\left(\substack{\frac{1}{2},\frac{1}{2}\\1};\alpha\right)\right]^2\label{trans 4}
\end{align}
If we substitute \eqref{t-def}, \eqref{trans 1}, \eqref{trans 2}, \eqref{trans 3}, and \eqref{trans 4} into \eqref{presumed equality}, we see that \eqref{presumed equality} holds whenever $\alpha$ is a sufficiently small real number.  This implies that \eqref{J in terms of hypergeometrics} holds up to a constant of integration.  The constant of integration is easily seen to equal zero, because both sides of \eqref{J in terms of hypergeometrics} approach $-\frac{1}{2}\log\alpha+0$ when $\alpha$ tends to zero.  Finally, if we add $\frac{1}{2}\log\alpha$ to either side of the identity, then both sides are analytic in a neighborhood of $\alpha=0$, so \eqref{J in terms of hypergeometrics} holds for $|\alpha|$ sufficiently small but non-zero.
\end{proof}
The formula for $J(t)$ holds on the positive real axis for $0<\alpha\le(\sqrt{2}-1)^2\approx.1715...$, which implies $t\ge 2.0802\dots$.  The identity fails when $\alpha>(\sqrt{2}-1)^2$, because the argument of the third hypergeometric function crosses a branch cut which lies on $[1,\infty)$.  In general, we can analytically continue \eqref{J in terms of hypergeometrics} along a ray starting from $\alpha=0$, and ending at a point where one of the functions ceases to be analytic.  Despite the fact that equation \eqref{J in terms of hypergeometrics} can not be used to calculate $J(2)$, it is still possible to reprove the formula of Baxter and Bazhanov via a closely related modular expansion \cite{RgNEW}:
\begin{equation}\label{J partial fractions}
\begin{split}
J\left(u\left(e^{-2\pi v}\right)\right)=&-3\log2 +\frac{15v}{\pi^3}\sum_{(n,k)\ne (0,0)}\frac{3n^2-(2v)^2 k^2}{\left(n^2+(2v)^2 k^2\right)^3}\\
&+\frac{48v}{\pi^3}\sum_{n,k}\frac{3(2n+1)^2-(2v)^2(2k+1)^2}{\left((2n+1)^2+(2v)^2(2k+1)^2\right)^3},
\end{split}
\end{equation}
where
\begin{align}\label{u-formula}
u(q)=\left(\sqrt[4]{2}\frac{\eta(q)\eta\left(q^4\right)}{\eta^2\left(q^2\right)}\right)^{12} +\left(\sqrt[4]{2}\frac{\eta(q)\eta\left(q^4\right)}{\eta^2\left(q^2\right)}\right)^{-12},
\end{align}
and $\eta(q)=q^{1/24}\prod_{n=1}^{\infty}(1-q^n)$.  Equation \eqref{J partial fractions} can be derived by combining equation \eqref{J in terms of hypergeometrics} with (2.12) and (2.14) in \cite{R09}.  It is possible to show that \eqref{J partial fractions} holds for $v\ge \frac12$, and that the left-hand side of the identity equals $J(2)$ when $v=\frac12$.  The right-hand side reduces to two-dimensional lattice sums, which can ultimately be evaluated by appealing to results of Glasser and Zucker \cite{GZ}.  Bertin and Rodriguez-Villegas have both used a similar modular approach to prove certain Mahler measure formulas \cite{Be}, \cite{RV}.

Equation \eqref{J in terms of hypergeometrics} also implies an identity between Mahler measures.  The Mahler measure of an $n$-variable Laurent polynomial, $P(x_1, \dots, x_n)$, is defined by
\begin{equation*}
\m(P):=\int_{0}^{1}\dots\int_{0}^{1}\log\left|P\left(e^{2\pi i t_1},\dots,e^{2\pi i t_n}\right)\right|d t_1\dots d t_n.
\end{equation*}
The second author related both $_5F_4$ functions to Mahler measures in \cite{R09}.  After some simplification, equation \eqref{J in terms of hypergeometrics} implies
\begin{align*}
\m&\left(8\sqrt{4\alpha(1-\alpha)}+\frac{8}{\sqrt{4\alpha(1-\alpha)}}-4\left(x+x^{-1}+y+y^{-1}+z+z^{-1}\right)+\left(x+x^{-1}\right)\left(y+y^{-1}\right)\left(z+z^{-1}\right)\right)\\
&\quad=11\m\left(\frac{4}{\sqrt{\alpha(1-\alpha)}}+\left(x+x^{-1}\right)\left(y+y^{-1}\right)\left(z+z^{-1}\right)\right)\\
&\qquad-7\m\left(\frac{4i(1-\alpha)}{\sqrt{\alpha}}+\left(x+x^{-1}\right)\left(y+y^{-1}\right)\left(z+z^{-1}\right)\right) \\
&\qquad-6\m\left(x^4+y^4+z^4+1+\frac{2(1+\alpha)}{\sqrt[4]{\alpha(1-\alpha)^2}}x y z\right),
\end{align*}
and this identity also holds on the real axis for $\alpha\in(0,(\sqrt{2}-1)^2]$.
\section{Conclusion and special values of $J(t)$}

We conclude by noting that formula \eqref{J(2) evaluation} of Baxter and Bazhanov is not an isolated result, and  that there are many additional explicit formulas for values of $J(t)$.
Most of these formulas involve $L$-functions of eta products.  These are not elementary constants, but they often carry deep number-theoretic significance, so in some sense they can still be regarded as fundamental constants.  For more details
on explicit Mahler measure formulas, we refer to the work of Boyd \cite{B98}.  We conclude with a single example of such a formula.  If we have a function $$f(\tau)=\sum_{n=1}^{\infty}a_n e^{2\pi i n \tau},$$ then the $L$-series associated with $f$ is defined by
\begin{equation*}
L(f,s):=\sum_{n=1}^{\infty}\frac{a_n}{n^s}.
\end{equation*}
Thus if we take $\eta(\tau)$ to be the usual Dedekind eta function, it is possible to prove
\begin{align}
J\left(\frac{5}{2}\right)=&\frac{24 \sqrt{3}}{\pi^3}L\left(\eta^3(2\tau)\eta^3(6\tau),3\right) + \frac{15 \sqrt{3}}{4 \pi}L_{-3}(2) - 3 \log2.\label{explicit J(5/2) value}
\end{align}
We have also obtained more complicated identities for $J(14)$, $J(322)$, and $J(t)$ for various irrational algebraic values of $t$ \cite{RgNEW}.

It is also noteworthy that the two hypergeometric functions that appear in Theorem 1 are precisely those that appear in our solution \cite{GR12} of the spanning tree constant for the simple-cubic lattice.

\section*{Acknowledgements}
AJG wishes to acknowledge helpful conversations with Vladimir Bazhanov, and financial support from the Australian Research Council through grant DP1095291.

\end{document}